\documentclass[dvipsnames]{article}

\usepackage[a4paper, total={151mm,247mm}]{geometry}
\usepackage{amsfonts}       %
\usepackage{amsmath}
\usepackage{amsthm} %
\usepackage{bm} %
\usepackage{graphicx}
\usepackage[all,cmtip]{xy}%
\usepackage{caption,subcaption} %

\usepackage{authblk} %

\usepackage{booktabs}

 \usepackage{multirow} %

\usepackage[shortlabels]{enumitem} %

\DeclareMathOperator{\EX}{\mathbb{E}}%

\usepackage[textsize=tiny]{todonotes}

\newtheorem{theorem}{Theorem}

\newtheorem{remark}{Remark}
\newtheorem{definition}{Definition}

\usepackage[acronym,nowarn]{glossaries}
\makeglossaries
\glsdisablehyper

\usepackage{tikz}
\tikzset{
  invisible/.style={opacity=0},
  visible on/.style={alt={#1{}{invisible}}},
  alt/.code args={<#1>#2#3}{%
    \alt<#1>{\pgfkeysalso{#2}}{\pgfkeysalso{#3}} %
  },
}
\usetikzlibrary{arrows,positioning,calc}

\usepackage[sorting=none]{biblatex}

\addbibresource{cgen.bib}

\author[1]{Wouter A.C. van Amsterdam}
\affil[1]{Julius Center for Health Sciences and Primary Care, University Medical Center Utrecht, Utrecht University, The Netherlands, Department of Data Science and Biostatistics}

\title{A causal viewpoint on prediction model performance under changes in case-mix: discrimination and calibration respond differently for prognosis and diagnosis predictions}

\begin{document}

\maketitle

\begin{abstract}

Prediction models need reliable predictive performance as they inform clinical decisions, aiding in diagnosis, prognosis, and treatment planning. The predictive performance of these models is typically assessed through discrimination and calibration.
Changes in the distribution of the data impact model performance and there may be important changes between a model's current application and when and where its performance was last evaluated.
In health-care, a typical change is a shift in \emph{case-mix}.
For example, for cardiovascular risk management, a general practitioner sees a different mix of patients than a specialist in a tertiary hospital.

This work introduces a novel framework that differentiates the effects of case-mix shifts on discrimination and calibration based on the causal direction of the prediction task. When prediction is in the causal direction (often the case for prognosis preditions), calibration remains stable under case-mix shifts, while discrimination does not. Conversely, when predicting in the anti-causal direction (often with diagnosis predictions), discrimination remains stable, but calibration does not.

A simulation study and empirical validation using cardiovascular disease prediction models demonstrate the implications of this framework. The causal case-mix framework provides insights for developing, evaluating and deploying prediction models across different clinical settings, emphasizing the importance of understanding the causal structure of the prediction task.

\end{abstract}

keywords: calibration, discrimination, case-mix, causal inference, prediction model, external validation

\section{Introduction}

Clinicians use prediction models for medical decisions, for example when making a diagnosis, estimating a patient's prognosis, or when making triaging or treatment decisions. %
When basing a medical decision on a prediction model it is important to know how reliable the model's predictions are, i.e. what is the model's \emph{predictive performance}, typically measured with \textit{discrimination} and \textit{calibration} in the case of binary outcomes.
Discrimination measures how well a prediction model separates \emph{positive} cases from \emph{negative} cases, whereas \emph{calibration} measures how well predicted probabilities align with observed event rates. %

An issue with predictive performance is that there may be important changes between when a model's predictive performance was last evaluated, and when and where it is used, meaning that the underlying \emph{data distribution} may have changed.
No model can have good predictive performance under all arbitrary changes in the data distribution,
but we may consider one important class of changes in distribution described with the term `case-mix'.
For instance, when comparing cardiovascular risk management in the general practitioner setting with a tertiary hospital setting, the frequency of certain comorbidities and risk factors will be different across settings.
Typically the tertiary center will encounter more high risk patients, so their `case-mix' is different than in the general practitioner setting.
Another change in case-mix is the frequency of myocardial infarction in patients presenting with chest pain at either the general practitioner or in those referred to acute cardiac care centers.
We present a formal definition of a shift in case-mix using the language of causality.
The \emph{independent causal mechanisms} principle states that when viewing the joint distribution of observed data trough a mechanistic causal lens, where \emph{effects} are created from inputs (\emph{causes}) by a causal \emph{mechanism}, the distribution of the inputs (causes) is independent of the mechanism that produces the outputs from the inputs \cite{petersElementsCausalInference2017}.
A natural causal definition of a shift in case-mix is thus a shift in the marginal distribution of the \emph{causes}, and we may make the additional assumption that the \emph{mechanism} is the same across environments.
When applied to the above example this would mean that though tertiary case center patients have a different distribution of risk factors than patients from the general practitioner setting, one may hope that given knowledge of sufficient risk-factors, the risk of cardiovascular disease for two patients with the same values of risk-factors is the same regardless of what setting they are in.

This causal definition of a shift in case-mix implies that when the prediction task is in the \emph{causal} direction versus in the \emph{anti-causal} direction, a change in case-mix has a different interpretation.
Inferring a diagnosis is typically prediction in the \emph{anti-causal} direction, meaning predicting the cause (=underlying diagnosis) based on its effects (=symptoms), and here a change in case-mix is a change in distribution of the prediction target (the diagnosis).
In contrast, prognosis is typically prediction in the \emph{causal} direction, meaning a future outcome predicted from current patient characteristics, and here a shift in case-mix is a change in the distribution of patient characteristics.
Importantly, depending on the prediction direction, either calibration \emph{or} discrimination is preserved under shifts in case-mix, but not both.
The crucial insight underlying our results is that a prediction model's \emph{discrimination} depends on the distribution of the features given the outcome ($X$ given $Y$) and is thereby invariant to changes in the marginal distribution of the outcome.
Conversely, \emph{calibration} depends on the distribution of the outcome given the features ($Y$ given $X$) and is thereby invariant to changes in the marginal distribution of the features.
See Figure \ref{fig:main} for a schematic overview.

Our result shows that the causal direction of the prediction has important implications for the development, evaluation and deployment of prediction models.
For example, when evaluating a model used for prognosis across different settings, changes in discrimination are expected under shifts in case-mix, but changes in calibration are not, and vice-versa for diagnostic models.
When re-evaluating a prognostic model in a different setting, a change in discrimination is expected and thus no cause for concern.
However, a marked change in calibration may warrant further research.
Another perhaps unexpected result is that when a model is evaluated across different environments, the observation that \emph{either} discriminaton or calibration remains stable is a stronger sign of robustness to changes in environment than when both remain stable. The reason is that when both remain stable, this proves that the testing environments were not meaningfully different. Only when either discrimination or calibration changes and the other is stable, we gain some confidence that the model remains robust across different environments.
This perspective helps developers and guideline makers judge where and when a prediction model has dependable predictive performance.
Also, depending on the task and whether discrimination or calibration is more important, prediction model developers may improve the robustness of their model to changes in case-mix by only including variables in the prediction model that are either all causal or all anti-causal but not mixed, when possible.

 \begin{figure}[htpb]
	\begin{subfigure}[b]{0.4\textwidth}
     \centering
	     \begin{tikzpicture}

		 \node[draw, text centered] (x) {$X$: patient characteristics};
		 \node[draw, right of = x, node distance=4.5cm, text centered] (y) {$Y$: future outcome};

		 \node[draw, rectangle, above of = x, node distance=1.5cm, text centered] (e) {environment};

		 \draw[->, line width = 1] (x) --  (y) node[midway,above=.25cm] {causal direction};
         \draw[->, line width = 1] (e) -- (x);
		 \draw[->, line width = 1, dashed] (x) to[bend right] node[below] {prediction direction}  (y);

	     \end{tikzpicture}
         \caption{DAG for prediction models predicting in the \emph{causal} direction, as in many \emph{prognosis} settings (e.g. predict future heart attacks based on current age and cholesterol levels)}
    \label{fig:dag-causal}
     \end{subfigure}
\hfill
\hfill
	\begin{subfigure}[b]{0.4\textwidth}
     \centering
	     \begin{tikzpicture}

		 \node[draw, text centered] (x) {$X$: symptoms};
		 \node[draw, right of = x, node distance=4.5cm, text centered] (y) {$Y$: diagnosis};

		 \node[draw, rectangle, above of = y, node distance=1.5cm, text centered] (e) {environment};

		 \draw[->, line width = 1] (y) -- (x) node[midway,above=.25cm] {causal direction};
		 \draw[->, line width = 1] (e) -- (y);
		 \draw[->, line width = 1, dashed] (x) to[bend right] node[below] {prediction direction}  (y);
	     \end{tikzpicture}
         \caption{
        DAG for prediction models predicting in the \emph{anti-causal} direction as in many \emph{diagnosis} settings (e.g. predict presence of a current heart attack based on the presence of chest pain and electrocardiography abnormalities).
}
    \label{fig:dag-anticausal}
     \end{subfigure}
\vfill
\hfill
\hfill
\begin{subfigure}[c]{0.4\textwidth}
\centering
    \includegraphics[width=\textwidth]{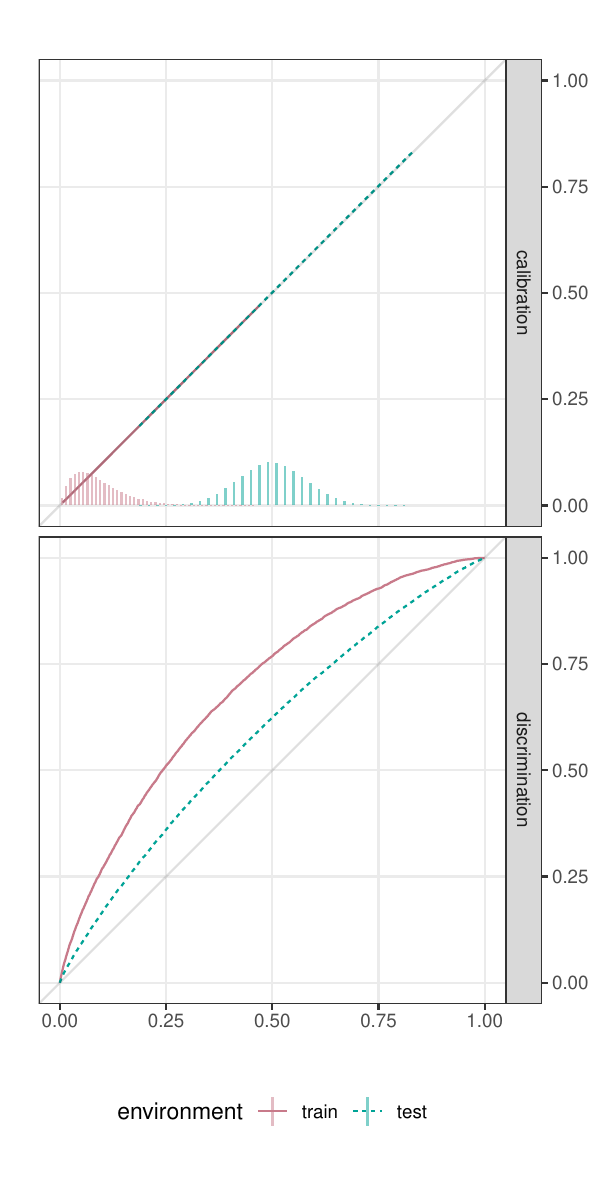}
    \caption{Between the training data and testing data, the \emph{calibration} remains the same (upper facet), but the \emph{discrimination} changes (lower facet)}
\label{fig:1causal}
\end{subfigure}
\hfill
\hfill
\hfill
\begin{subfigure}[c]{0.4\textwidth}
\centering
    \includegraphics[width=\textwidth]{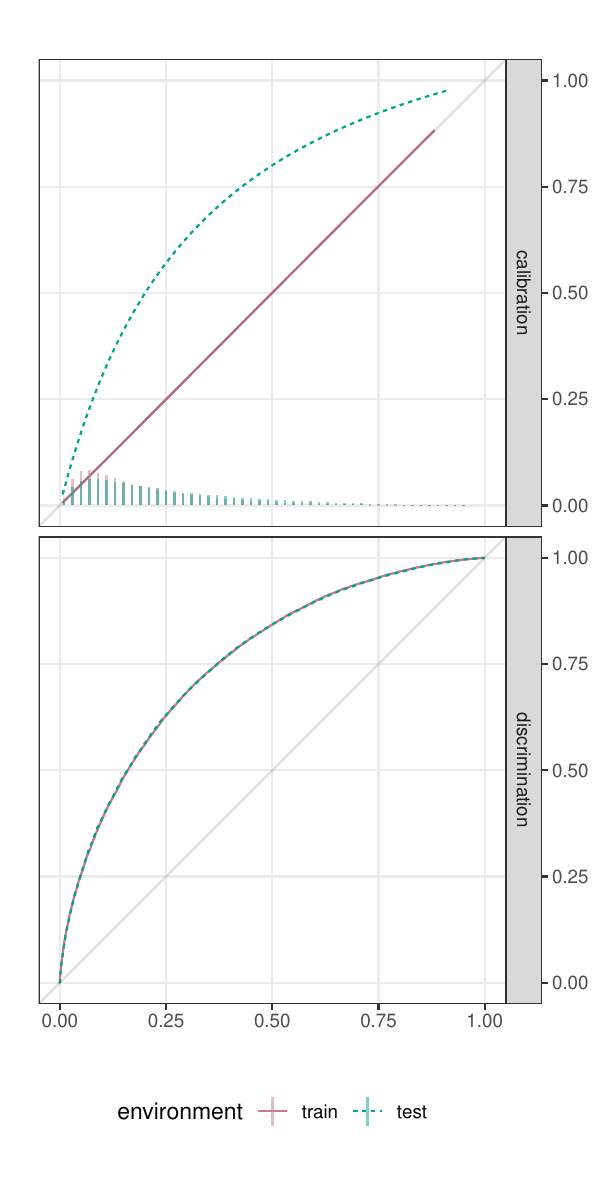}
    \caption{Between the training data and testing data, the \emph{calibration} changes (upper facet), but the \emph{discrimination} remains the same (lower facet)}
\label{fig:1anticausal}
\end{subfigure}
\caption{Overview of main results.
    Depending on the causal direction of the prediction, a shift in `case-mix' may be defined as either a shift in the marginal distribution of the \emph{features} $X$ for \emph{causal} prediction (\ref{fig:dag-causal}) or a shift in the marginal distribution of the \emph{outcome} $Y$ for \emph{anti-causal} prediction (\ref{fig:dag-anticausal}).
    With these definitions, for models predicting in the \emph{causal direction}, the \emph{calibration} will remain constant under case-mix shifts between the training data and the testing data but not the \emph{discrimination} (\ref{fig:1causal}). For models predicting in the \emph{anti-causal direction} the reverse is true (\ref{fig:1anticausal}).
    The calibration facets are calibration curves with on the horizontal axis the predicted probability and on the vertical axis the actual probability.
    The discrimination facets are receiver-operating-curves with on the horizontal axis 1 minus specificity and on the vertical axis sensitivity.
DAG: directed acyclic graph
}
\label{fig:main}
\end{figure}

To introduce the framework, we first review the concepts of discrimination and calibration and then we define changes in case-mix from a causal viewpoint.
Next put the two pieces together in a new framework and answer: when to expect what changes in predictive performance?
We illustrate the result with a simulation study and test the framework empirically in a systematic review of 1382 prediction models, where we find that prognostic models indeed have more variance in discrimination when tested in external validation studies.
Finally we discuss how this theory can be used in practice.

\pagebreak

\section{Notation and review of predictive performance: discrimination and calibration}

We consider prediction models of a binary \emph{outcome} $Y$ using \emph{features} $X$ with a prediction model $f: \mathcal{X} \to [0,1]$.
The features can come from an arbitrary (multi-)dimensional distribution.%
We will denote environments with an environment variable $E$ where for example $E=0$ may be a general practitioner setting, $E=1$ a community hospital and $E=2$ a university medical center \cite{bareinboimCausalInferenceDatafusion2016}.
With $P(.)$ we will denote (conditional) distributions or densities over random variables, for example $P(Y|X)$ denotes the distribution of outcome $Y$ given features $X$.%

\subsection{Discrimination: sensitivity, specificity and AUC}
The typical metrics of discrimination are sensitivity (sometimes called recall), specificity and AUC.
Sensitivity is the ratio of true positives over the total number of positive cases.
Specificity is the ratio of true negatives over the total number of negative cases.
To calculate sensitivity and specificity, we need to choose a threshold $0\leq \tau \leq 1$ for the output of $f(X)$ and label all $f(X) > \tau$ as \emph{positive} predicted cases and $f(X) \leq \tau$ as \emph{negative} predicted cases.
This results in a 2 by 2 table with predicted versus actual outcomes (sometimes called the `confusion table'), see Table \ref{tab:confusion}.
By varying $\tau$ between 0 and 1 we get a range of values for sensitivity and specificity.
Plotting these in the receiver-operating-curve and calculating the area under this curve we get the popular AUC metric or c-statistic.
Note that for calculating sensitivity we only need the \emph{positive} cases ($Y=1$), and for specificity we only need the \emph{negative} cases ($Y=0$).
\emph{Measures of discrimination depend on the distribution of the prediction (and thus the features) given the outcome.}
This immediately implies that if we were to only change the ratio of positive and negative cases through some hypothetical intervention, the sensitivity and specificity will remain unchanged, and thus the resulting AUC.
Therefore it is sometimes said that sensitivity and specificity are prevalence independent.

\begin{table}[]
	\centering
\begin{tabular}{llll}
                                                 &                        & \multicolumn{2}{c}{outcome ($Y$)}                                               \\ \cline{3-4} 
                                                 & \multicolumn{1}{l|}{}  & \multicolumn{1}{l|}{1}              & \multicolumn{1}{l|}{0}              \\ \cline{2-4} 
\multicolumn{1}{c|}{\multirow{2}{*}{prediction ($f(X) > \tau$)}} & \multicolumn{1}{l|}{1} & \multicolumn{1}{l|}{true positive}  & \multicolumn{1}{l|}{false positive} \\ \cline{2-4} 
\multicolumn{1}{c|}{}                            & \multicolumn{1}{l|}{0} & \multicolumn{1}{l|}{false negative} & \multicolumn{1}{l|}{true negative}  \\ \cline{2-4} 
                                                 &                        & sensitivity: $P(f(X)>\tau|Y=1)$  & specificity: $P(f(X) \le \tau | Y=0)$                        
\end{tabular}
\caption{Confusion table. By specifying a threshold $0 \leq \tau \leq 1$ for a prediction model $f: \mathcal{X} \to [0,1]$ and tabulating the results against the ground truth outcome $Y \in \{0,1\}$, we get the confusion table and can calculate metrics of discrimination such a sensitivity and specificity. }
\label{tab:confusion}
\end{table}

\subsection{Calibration}

Calibration measures how well predicted probabilities align with actual event rates.
In words, assume we take a particular value for the predicted probability of the outcome, say $\alpha=10\%$.
Then if we gather all cases for which $f(X) = \alpha$, then the model is calibrated for that value of $\alpha$ when the fraction of positive outcomes in this subset is exactly $\alpha$.
A prediction model is perfectly calibrated when this holds for all unique values that $f(X)$ attains.
For a formal definition, see Definition \ref{def:calibration} in the Appendix \ref{app:defs}.
Unfortunately, measuring discrimination with a single metric is much harder then measuring discrimination for practical \cite{vancalsterCalibrationAchillesHeel2019,vancalsterCalibrationHierarchyRisk2016} and theoretical reasons \cite{blasiokUnifyingTheoryDistance2023}, a problem we will ignore.
However, fundamentally, calibration measures the alignment between $f(X)$ and the probability of the outcome given $X$.
\emph{Measures of calibration are thus measures of the distribution of the outcome given the features ($Y$ given $X$).}

\pagebreak

\section{A causal framework for predictive performance under changes in case-mix}

Since discrimination depends on the distribution of the features given the outcome ($X$ given $Y$) but calibration on the distribution of the outcome given the features ($Y$ given $X$), we may expect metrics of discrimination and calibration to respond differently when changes occur in the marginal distribution of $X$ or $Y$.
In this section we first formalize the notion of a shift in \emph{case-mix} and how this depends on whether a prediction is in the \emph{causal} direction (future outcome given features) or the \emph{anti-causal} direction (e.g. disease given symptoms).
Then we will draw the connection between the two insights leading to our main result.

\subsection{A shift in case-mix is a change in the marginal distribution of the cause variable}
Inspired by the principle of independence of \emph{cause and mechanism} \cite{petersElementsCausalInference2017, schoelkopfCausalAnticausalLearning2012}, we define a shift in \emph{case-mix} between different environments (e.g. general practitioner versus hospital setting) as a change in the marginal distribution of the \emph{cause} variable.
When the prediction is in the \emph{causal} direction, a shift in case-mix is a change in the marginal distribution of the features $X$, whereas when the prediction is in the \emph{anti-causal} direction it is a shift in the marginal distribution of the outcome $Y$.

Finally, the prediction problem could be neither causal or anti-causal, but \emph{confounded} by another variable $Z$, in that case the shift is in the distribution of the confounder $Z$.
See Figure \ref{fig:dags} for directed acyclic graphs (DAGs) depicting these situations and Table \ref{tab:settings} for an overview with examples.
We give a formal definition in the Appendix \ref{def:casemix}.

\begin{table}[htpb]
	\centering
	\caption{different prediction settings}
	\label{tab:settings}

	\begin{tabular}{l|c|c|c}
		& anti-causal & causal & confounded \\ \hline
		shifted distribution & $Y$ & $X$ & $Z$  \\
		typical setting	& diagnosis & prognosis & prognosis  \\
        example outcome & pneumonia & survival & lung cancer diagnosis \\
		example features & temperature & age & yellow fingers \\
        Figure & \ref{fig:causal} & \ref{fig:anticausal} & \ref{fig:fork}
	\end{tabular}
\end{table}

 \begin{figure}[ht!]
	\begin{subfigure}[b]{0.3\textwidth}
     \centering
	     \begin{tikzpicture}

		 \node[text centered] (x) {$X$};
		 \node[right of = x, node distance=1.5cm, text centered] (y) {$Y$};

		 \node[draw, rectangle, above of = x, node distance=1.5cm, text centered] (e) {$E$};

		 \draw[->, line width = 1] (x) --  (y);
		 \draw[->, line width = 1] (e) -- (x);
	     \end{tikzpicture}
	\caption{causal prediction}
	\label{fig:causal}
     \end{subfigure}
\hfill
	\begin{subfigure}[b]{0.3\textwidth}
     \centering
	     \begin{tikzpicture}

		 \node[text centered] (x) {$X$};
		 \node[right of = x, node distance=1.5cm, text centered] (y) {$Y$};

		 \node[draw, rectangle, above of = x, node distance=1.5cm, text centered] (e) {$E$};

		 \draw[->, line width = 1] (y) --  (x);
		 \draw[->, line width = 1] (e) -- (y);
	     \end{tikzpicture}
	\caption{anti-causal prediction}
	\label{fig:anticausal}
     \end{subfigure}
\hfill
	\begin{subfigure}[b]{0.3\textwidth}
     \centering
	     \begin{tikzpicture}

		 \node[text centered] (x) {$X$};
		 \node[right of = x, node distance=1.5cm, text centered] (y) {$Y$};
		 \node[above of = y, node distance=1.5cm, text centered] (z) {$Z$};

		 \node[draw, rectangle, above of = x, node distance=1.5cm, text centered] (e) {$E$};

		 \draw[->, line width = 1] (z) --  (x);
		 \draw[->, line width = 1] (z) --  (y);
		 \draw[->, line width = 1] (e) -- (z);
	     \end{tikzpicture}
	\caption{confounded prediction}
	\label{fig:fork}
     \end{subfigure}
     \caption{directed acyclic graphs for 2-variable prediction problems with a shift in case-mix, meaning the environment variable only affects the marginal distribution of only the cause variable. The prediction is always made from feature $X$ to outcome $Y$, $E$ denotes the environment.}
     \label{fig:dags}
 \end{figure}

Each of the DAGs in Figure \ref{fig:dags} encodes different conditional idependencies.
Specifically the DAG in the causal direction (Figure \ref{fig:causal}) implies that $Y$ is independent of $E$ given $X$.
This entails that the distribution $P(Y|X)$ is \emph{transportable} across environments, so for different environments $E=0,1,\ldots$, $P(Y|X,E=0) = P(Y|X,E=1) = P(Y|X)$, but $P(X|Y)$ is not transportable: $P(X|Y,E) \neq P(X|Y)$.
Conversely, in the anti-causal direction (Figure \ref{fig:anticausal}) the distribution $P(X|Y)$ is transportable, meaning $P(X|Y,E) = P(X|Y)$, but not $P(Y|X)$.
In the confounded DAG (Figure \ref{fig:fork}) neither $P(Y|X)$ or $P(X|Y)$ is transportable.

In the DAGs in Figure \ref{fig:dags} the environment variable influences the cause variable ($X,Y$ or $Z$) but not the effect variable ($Y$ or $X$).
Why exclude arrows from the environment to the effect variable in the definition of a shift in case-mix?
First, when viewed as a mechanistic description of the data generating process, the principle of independence of cause and mechanism states that the distribution of the cause variable is independent of the mechanism that produces the effect variable \cite{petersElementsCausalInference2017}.
Also, if there is an arrow from environment to the effect variable, neither $P(Y|X)$ or $P(X|Y)$ are transportable across environments so nothing can be said regarding the calibration and discrimination of a prediction model on an unseen environment based on data from the observed environments only and the assumptions expressed in the DAG.
Finally, in clinical settings it may be reasonable based on temporal ordering and patient selection mechanisms to assume that \emph{at least} the distribution of the cause variable differs between environments, but maybe not the effect given the cause.
See the Appendix \ref{app:examples} for several concrete medical examples where these assumptions may hold.

\subsection{Main result: discrimination and calibration respond differently to changes in case-mix depending on the causal direction of the prediction}

With the DAGs describing the different possible shifts in case-mix under consideration and the definitions of discrimination and calibration we can now state our main result, of which the formal versions are presented in the Appendix \ref{app:theorems}.

\emph{When predicting in the anti-causal direction (often with diagnosis predictions), a shift in case-mix across environments means a shift in the marginal distribution of the outcome, and discrimination remains stable across environments but not calibration. Conversely, when predicting in the causal direction (often with prognosis predictions), a shift in case-mix across environments means a shift in marginal distribution of the features, and calibration remains stable, but not discrimination.}

For prediction in the causal direction, when $f$ is perfectly calibrated on an environment, it will remain perfectly calibrated under shifts of the marginal distribution of the features (see Theorem \ref{th:calib} in the Appendix).
Note that when $f$ is not perfectly calibrated and this mis-calibration depends on $X$, in general the average calibration will also change when predicting in the causal direction.
An important implication of this result is that \emph{discrimination or calibration may be preserved under changes in case-mix, but typically not both}\footnote{For predicting in the causal direction, examples where the AUC remains constant across environments may be constructed. Consider having a mixture of beta-distributions for $P(Y|X)$ with their modes $\mu_1 = 0.25$ and $\mu_2 = 0.75$. Shifting $\mu_1$ to 0 will increase AUC, shifting $\mu_2$ to 0.5 will decrease AUC. By shifting both modes at the same time, these changes can be set to balance out.}.

As a remark, we note that perfectly calibrated models obviously cannot be better calibrated in other environments, so any change in calibration necessarily implies a worsening of calibration.
For discrimination, this is not automatically the case. In fact, models show better discrimination in other environments then the training data when the distribution of outcome probabilities becomes less concentrated around 50\%.

\section{Simulation and empirical evaluation}

\subsection{Illustrative simulation}

Our main result has important implications when interpreting changes in predictive performance across environments.
To illustrate our result we now present a simulation study.
Consider two prediction models, one is a prognostic model predicting in the causal direction, the other a diagnostic model predicting in the anti-causal direction.
Denoting $\sigma^{-1}(p)=\log \frac{p}{1-p}$ as the logit function and $\mathcal{N}$ the Gaussian distribution,
the data-generating mechanisms are:

\begin{align*}
    \label{eq:dgm-prognosis}
    \text{prognosis:} &                     & \text{diagnosis:} & \\
    P_y &\sim \text{Beta}(\alpha_e,\beta_e) & y &\sim \text{Bernouli}(P_e) \\
    x   &= \sigma^{-1}(P_y)                 & x &\sim \mathcal{N}(y, 1) \\
    y   &\sim \text{Bernoulli}(P_y)         &   &
\end{align*}

We evaluate both models in three hypothetical environments: a screening environment with low outcome prevalence, a general practitioner setting with intermediate prevalence and a hospital setting with high prevalence.
For the prognosis model, the marginal distribution of $X$ depends on the environment through $\alpha_e,\beta_e$, but not the distribution of $Y$ given $X$.
For the diagnosis model, the marginal distribution of $Y$ depends on the environment through $P_e$, but not the distribution of $X$ given $Y$.
The different values for these parameters in the simulation are given in Table \ref{tab:simprms}.

\begin{table}[htp]
    \centering
\begin{tabular}{lllll}
          &                                &           &     &          \\
task      & \multicolumn{1}{l|}{parameter} & screening & general practitioner  & hospital \\ \hline
prognosis & \multicolumn{1}{l|}{$\alpha$}  & 2         & 5   & 10       \\
          & \multicolumn{1}{l|}{$\beta$}   & 20        & 10  & 20       \\
diagnosis & \multicolumn{1}{l|}{p}         & 0.2       & 1/3 & 0.5     
\end{tabular}
\caption{Values for different simulation parameters in three hypothetical environments.}
\label{tab:simprms}
\end{table}

In Figure \ref{fig:overview} we show the results of training a prediction model in the screening environment and evaluating it either in the same environment (`internal validation') or in a different environment (`external validation').
For the prognostic model the calibration remains the same across environments; %
the discrimination changes across environments.
For the diagnostic model, the reverse is true.

\begin{figure}[htpb]
    \centering
    \includegraphics[width=0.8\textwidth]{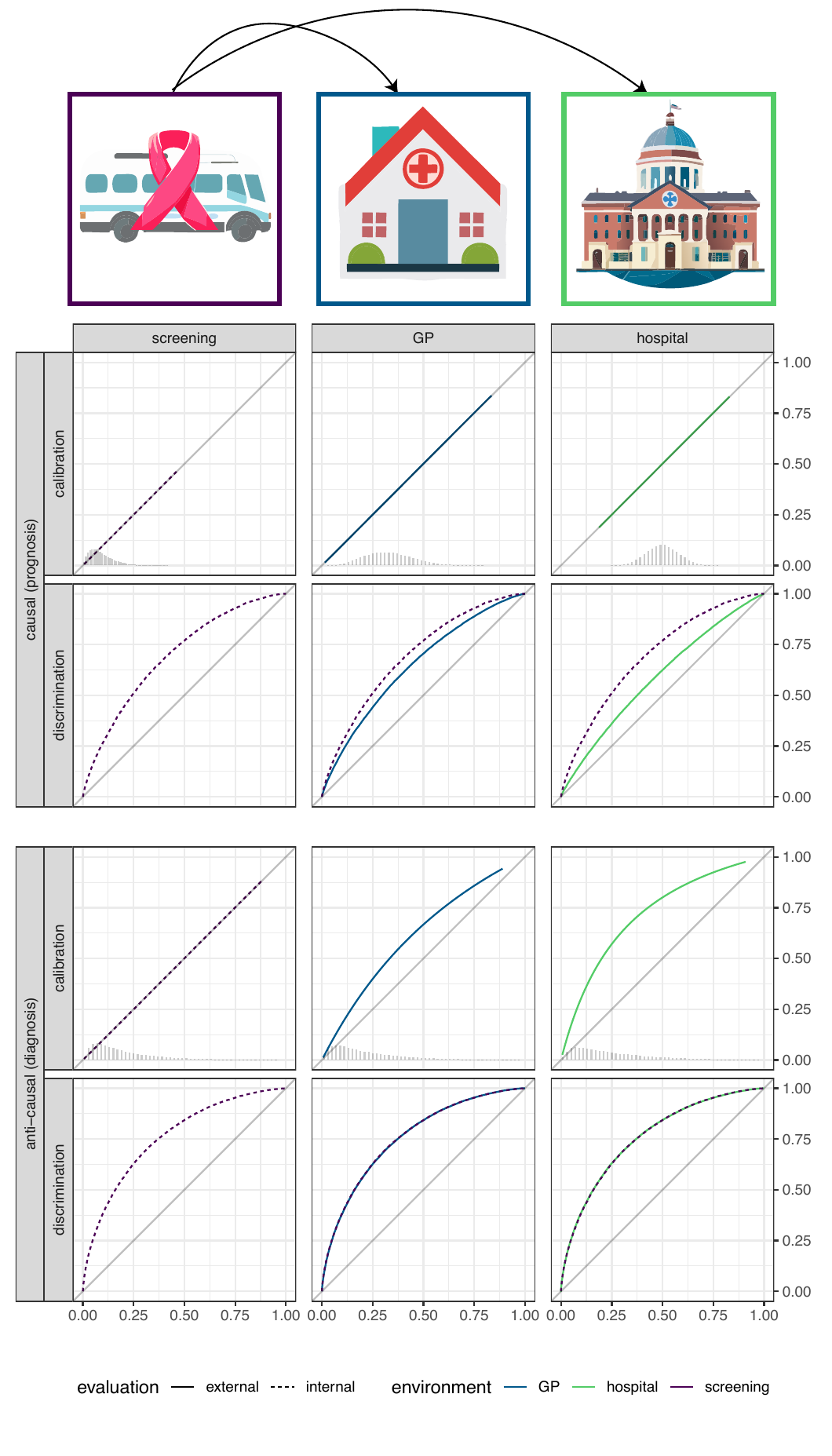}
    \caption{Overview figure of illustrative simulation experiment of a model trained on data from a screening environment, and evaluated on either the screening environment (`internal validation') or the general practitioner (GP) environment or the hospital environment (`external validation'), with increasing outcome probabilities. For models predicting in the \textit{anti-causal} direction (e.g. diagnostic models), a shift in case-mix entails a shift in the distribution of the outcome, so discrimination remains the same but calibration changes. For models predicting in the \textit{causal} direction (e.g. prognosis models), a shift in case-mix entails a shift in the distribution of the features, so calibration remains the same but the discrimination changes.
        The discrimination facets are receiver-operating-curves with on the horizontal axis 1 minus specificity and on the vertical axis sensitivity.
        The calibration facets are calibration curves with on the horizontal axis the predicted probability and on the vertical axis the actual probability.
    }
    \label{fig:overview}
\end{figure}

By repeating this process for each of the three environments, each time training on one environment and evaluating on all environments for both the causal prediction model and the anti-causal model, we get in total six models, each evaluated three times.
We measure discrimination with AUC and calibration error as the average absolute difference between the predicted outcome probability and the actual outcome probability for each observation:  $\frac{1}{N} \sum_i^N \left| P(Y=1|X=x_i) - f(x_i)\right|$ (analogous to the Integrated Calibration Error defined in \cite{austinIntegratedCalibrationIndex2019}).
Plotting these 18 points on 6 lines in 2 dimensions leads to an interesting pattern, where the models predicting in the causal direction are easily discernible from those predicting in the anti-causal direction (Figure \ref{fig:combined}).
In the Appendix \ref{app:sims} we provide visualizations of $P(Y|X)$ and $P(X|Y)$ for the different environments and tasks.

\begin{figure}[htpb]
    \centering
    \includegraphics[width=0.8\textwidth]{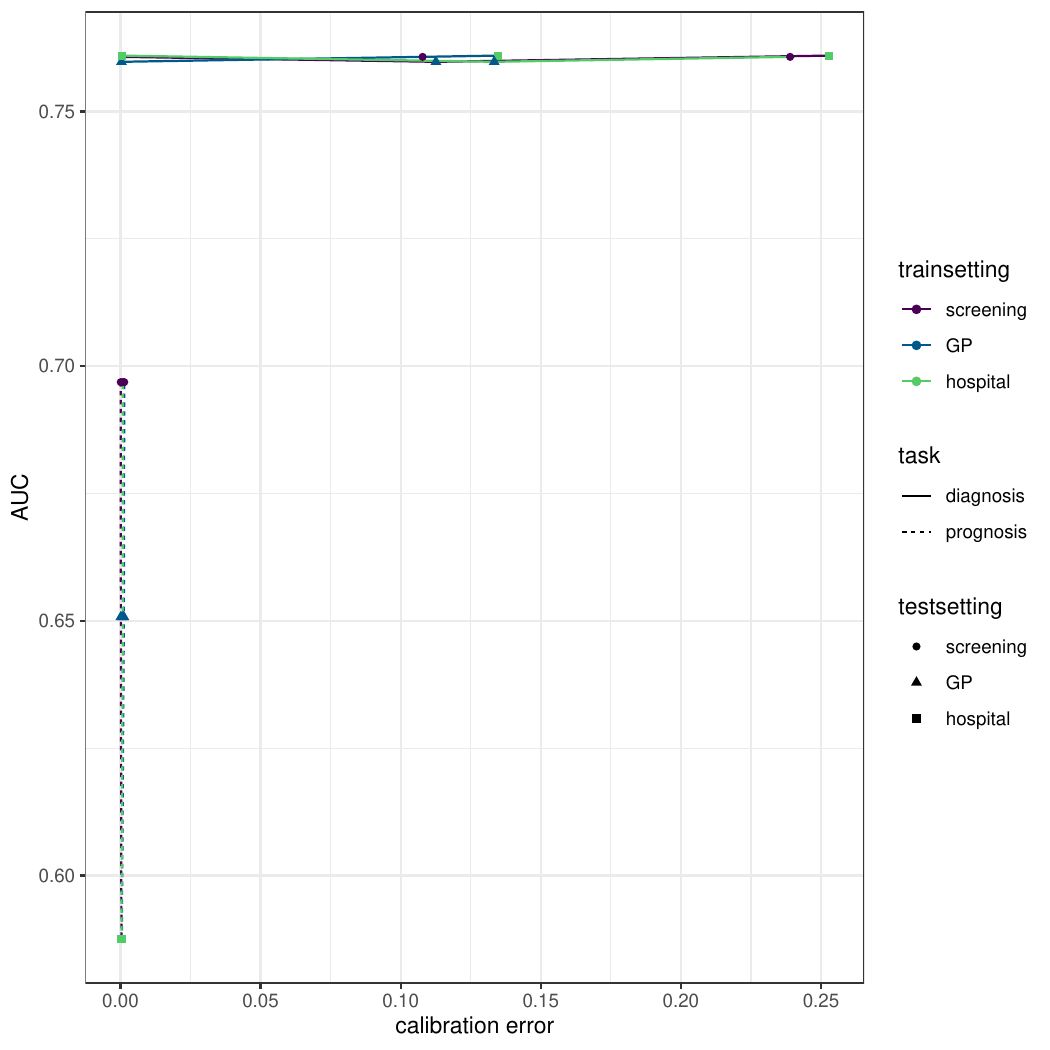}
    \caption{Combined results of the simulation experiment. Each model is connected by a line.}
    \label{fig:combined}
\end{figure}

\subsection{Empirical Study}

As an empirical evaluation we re-used data from a published systematic review on prediction models in cardiovascular disease which included 2030 external validations of 1382 predictions models \cite{wesslerExternalValidationsCardiovascular2021} and whose data is publicly available at https://www.pacecpmregistry.org.
The review investigated changes in model performance when comparing the original publication with later external validation studies.
The authors classified the prediction models as either `diagnostic' or `prognostic' (indicated by a follow-up time of less than 3 months, 3--6 months or more than 6 months).
Selecting only prediction models with one or more validations and information on AUC in both the original study and validation study, and with information on model type (diagnostic versus prognostic), 1170 validation studies remained of 342 prediction models, 16 of which were validation studies of 11 diagnostic models.
Comparing the AUC in the original study ($\text{AUC}_0$) with external validation studies ($\text{AUC}_1$), we calculated the relative difference in AUC as suggested by the authors:

\begin{equation*}
    \delta:=\frac{(\text{AUC}_1 - 0.5) - (\text{AUC}_0 - 0.5)}{\text{AUC}_0 - 0.5}.
\end{equation*}

Our framework predicts that for diagnostic models that predict in the anti-causal direction, the AUC remains the same so $\text{AUC}_0=\text{AUC}_1$, thus $\text{VAR}(\delta)=0$, but not for prognosis models that predict in the causal direction.
The studies in this systematic review are likely not perfectly \emph{causal} or \emph{anti-causal}, and because of sampling variance, variation in AUC will occur.
Still we expect the variance of $\delta$ to be higher for prognosis models than for diagnosis models.
In these data this was indeed the case with $\text{VAR}(\delta_{\text{prognostic}}) \approx 8.2 * \text{VAR}(\delta_{\text{diagnostic}}) = 0.019$, 95\% confidence of ratio: 3.41 - 15.10, p-value for F-test $<0.001$.
Unfortunately the review provided no quantitative measures of calibration so a similar comparison of the variance of changes in calibration could not be made.

The code needed to reproduce the simulation experiment and empirical evaluation are available at Zenodo.

\section{Related work}

The notion that calibration is stable under shifts in the distribution of the \emph{cause} variables has long been appreciated (e.g. \cite{schoelkopfCausalAnticausalLearning2012, piccininniDirectedAcyclicGraphs2020, waldCalibrationOutofDomainGeneralization2021}).
Much of the theory in this paper is inspired by Sch{\"o}lkopf's work on causal and anti-causal learning \cite{schoelkopfCausalAnticausalLearning2012}.
This work connects the general framework to the medical setting in two ways: we define a familiar term `case-mix' in a formal causal language, and then derive how two canonical metrics of predictive performance (discrimination and calibration) respond differently to changes in case-mix for \emph{causal} prediction models (prognosis) and \emph{anti-causal} models (diagnosis).

Prior work noted that prediction models that are calibrated in multiple environments are provably free from anti-causal predictors \cite{waldCalibrationOutofDomainGeneralization2021}. Our focus is in the reverse direction: when to expect stable calibration across environments.
Jaladoust and colleagues derived general bounds for functionals of the target distribution in a new environment \cite{jalaldoustPartialTransportabilityDomain2024}. 
Our work describes when certain specific functionals from the distributions (discrimination and calibration) are stable across settings, tailored to typical needs in the (medical) prediction model setting.
Other prior work requires detailed assumptions on the causal relationships between variables \cite{subbaswamyUnifyingCausalFramework2022}, or access to data from multiple environments.
Subbaswamy consideres loss functions of the form $l(\hat{y},y)$ which implicitly depend on the joint distribution of $X,Y$ trough the expectation $L=E_{X,Y}[l(\hat{y},y)]$.
Our work is focused on metrics that are direct functionals of the conditional distribution $Y|X$ and $X|Y$.
Also, Subbaswamy focusses on min-max optimality.

Our framework also provides a new perspective on the results of the study by Fehr at al \cite{fehrAssessingTransportabilityClinical2023}.
They experimented with prediction models that contained either causal factors of the outcome (related to our \textit{prognosis} models), anti-causal factors (related to our \textit{diagnosis} models), or a combination of both.
The performance of different prediction models was evaluated under different shifts in variables that were at the same time a direct cause of the outcome and a cause of other variables.
Fehr et al found that for models predicting only with cause variables, the calibration is stable under interventions on only cause variables, as directly explained by our main result.
When predicting with anti-causal factors, they observed that under interventions on the cause variables, the calibration degrades for models that are well calibrated on the training data.
This setting is the closest to our \textit{diagnostic} setting, though technically it is a mix of the anti-causal DAG \ref{fig:anticausal} and the confounded DAG \ref{fig:fork}.

\section{Discussion}

We present a novel causal framework for understanding changes in prediction model performance under shifts in case-mix, by defining a shift in case-mix as a change in the marginal distribution of the cause variable.
This leads to a new understanding of why in certain situations the discrimination of a model may be relatively stable when evaluated in a different setting, but not the calibration, and vice-versa.

Limitations are that the definition of a shift in case-mix is an abstraction and pure interventions on only either the features or the outcome may be unrealistic in practice.
Many diagnostic prediction models may contain features that have a causal path to the diagnosis (e.g. age), or `risk factors' for the disease that are not caused by the presence or absence of the disease.
Also, the `disease' itself may be an abstraction, and the diagnosis used in medical practice may be a combination of effects of an underlying biological process.
Systematic reviews of diagnostic models indeed show variation in sensitivity and specificity with variation in disease prevalence, a phenomenon also referred to as the \textit{spectrum-effect} \cite{leeflangVariationTestsSensitivity2013}.
Still, when compared with prognostic models, diagnostic models had lower variability in discrimination in our empirical study.
The current empirical evaluation was limited, and classifying diagnostic models as anti-causal and prognostic models as causal may be too crude. Also, no quantitative data on calibration were available to test whether calibration was more stable for prognostic models.
Future empirical studies of externally evaluated prediction models will shed more light on how this theory pans out in practice.
Mis-calibration may occur when variables not included in the model are also shifted between environments.

What are the implications of the causal case-mix framework for different stakeholders?
A main use of this new causal case-mix framework is to provide an explanation of (lack of) expected and observed differences in prediction model performance across environments.
For prediction model developers, this framework provides a new way to think about the features included in a prediction model.
In some settings such as triaging patients in emergency room for early medical evaluation, the utility of a prediction model depends mostly on its discrimination. In other settings such as cardiovascular risk management, a prediction model's utility depends on its calibration. Depending on this utility function, prediction model developers may opt to include mostly causal or anti-causal features in a prediction model, if dependable performance across environments is desired.
The framework adds another perspective on the discussion on when and where to re-calibrate a prediction model \cite{swaminathanReflexiveRecalibrationCausal2025,lekadirFUTUREAIInternationalConsensus2025}.
When calibration is important and the model does include anti-causal features, it is likely necessary to always recalibrate the model when taking it to a new environment, in line with many recommendations \cite{youssefExternalValidationAI2023}.
However, when discrimination changes for a prediction model in the causal direction, this may not warrent a re-fitting of the model as this change is to be expected under a shift in case-mix.
A decrease (or increase) of discrimination may be indicative of a shift in the \emph{data}, meaning more (or less) patients with probabilities closer to 50\%, than of a bad model.

For researchers that evaluate prediction models and policy makers, it was long known that no models are robust to arbitrary changes in distribution. This framework implies that a subset of models should have stable calibration or discrimination. For example, observing a stable discrimination of a diagnostic model with anti-causal features in several different environments may provide confidence that the model is indeed robust to environmental changes. At the same time, this model's calibration should \emph{not} be stable across evaluations. Whereas normally a stable calibration would be seen as a re-assuring sign, having both calibration and discrimination stable across environments is a sign that the environments are not meaningfully different at all. A stable discrimination paired with unstable calibration (or the other way around) is a stronger sign of robustness to changes in environment than when both are stable, as this would only occur when the environment are too similar.

\section*{Acknowledgments}

The author kindly acknowledges Anne de Hond, Oisin Ryan and Valentijn de Jong for fruitful discussions on this framework.                                                                                                                                                                          

\printbibliography

\section*{Appendix}

\subsection*{Definitions}\label{app:defs}
\begin{definition}[calibration]\label{def:calibration}
  Let $P(X,Y)$ be a joint distribution over feature $X$ and binary outcome $Y$, and $f: \mathcal{X} \to [0,1]$ a deterministic prediction model. \emph{$f$ is perfectly calibrated with respect to $P(X,Y)$} if, for all  $\alpha \in [0,1]$ in the range of $f$, $\EX_{X,Y \sim P(X,Y)}[Y|f(X)=\alpha]=\alpha$.
\end{definition}

\begin{definition}[case mix]\label{def:casemix}
	Let $Z,X,Y$ be random variables and $E$ an environment variable. Assume one of the three following causal directed acyclic graphs labeled \emph{causal}, \emph{anti-causal} and \emph{fork} (shown also in Figure \ref{fig:dags}):
	\begin{enumerate}
		\item causal: $E \to X \to Y$
		\item anti-causal: $E \to Y \to X$
		\item fork: $E \to Z \to X; Z \to Y$
	\end{enumerate}
	Let $P_E(.)$ denote the distribution of variable $.$ in environment $E$.
	A \emph{shift in case-mix} across environments $e,e' \in \mathcal{E}$ is a shift in the distribution of the direct child of $E$ in the DAG, meaning a shift in:
	\begin{enumerate}
		\item $P_E(X)$ when DAG = causal
		\item $P_E(Y)$ when DAG = anti-causal
		\item $P_E(Z)$ when DAG = fork
	\end{enumerate}
\end{definition}

\begin{remark}[conditional independencies]\label{prop:indeps}
	The causal DAGs enumerated in Definition \ref{def:casemix} imply the following conditional independencies regarding random variables $X,Y$:

\begin{table}[htp]
    \centering
\begin{tabular}{|l|l|l|l|}
\hline
                         & causal              & anti-causal         & fork                \\ \hline
$P_E(X)$ vs $P(X)$       & $=$                 & $\color{red}{\neq}$ & $\color{red}{\neq}$ \\ \hline
$P_E(Y)$ vs $P(Y)$       & $\color{red}{\neq}$ & $=$                 & $\color{red}{\neq}$ \\ \hline
$P_E(Y|X)$ vs $P(Y|X)$   & $=$                 & $\color{red}{\neq}$ & $\color{red}{\neq}$ \\ \hline
$P_E(X|Y)$ vs $P(X|Y)$   & $\color{red}{\neq}$ & $=$                 & $\color{red}{\neq}$ \\ \hline
\end{tabular}
\end{table}

\end{remark}

\subsection*{Theorems}
\label{app:theorems}

We now describe our main result.

\begin{theorem}[perfectly calibrated models remain perfectly calibrated under marginal shifts in $X$]
	\label{th:calib}
    Given binary $Y$, prediction model $f: \mathcal{X} \to [0,1]$ and environment $E \in \{\text{train},\text{test}\}$.
    Assume $X$ takes on values from a measureable space $\mathcal{X}$ with measures $\phi_{train}(x), \phi_{test}(x)$ on the training and testing environment, and assume $\phi_{test}(x) << \phi_{train}(x)$ (the support of the test distribution is contained in the support of the training distribution).
	Assume $P_{E}(Y,X) = P_{E}(X)P(Y|X)$ (shift in $P(X)$ but not $P(Y|X)$ between environments).
    Define the \emph{miscalibration} of $f$ under $P_{E}$ for value $X=x$ as:

	\begin{equation}
		\label{eq:xi}
        \xi(x) := |f(x) - P(Y=1|X=x)|
	\end{equation}

	Then the integrated calibration index (ICI) \cite{austinIntegratedCalibrationIndex2019} on distribution $P_E$ is:
	\begin{equation}
		\label{eq:ece_calib}
		\text{ICI}_{E} = \EX_{X \sim P_{E}(X)} \xi(x)
	\end{equation}

    Theorem statement:
    \emph{a model that is perfectly calibrated on the training distribution (i.e. $\xi(x) =  0 \iff \phi_{train}(x) > 0$) remains perfectly calibrated in the test distribution}:
	\begin{equation}
		\label{eq:ece_perfect}
        \text{ICI}_{\text{train}} = \text{ICI}_{\text{test}} = 0
	\end{equation}

\end{theorem}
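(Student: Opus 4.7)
The plan is to show that the miscalibration function $\xi(x)$ is itself invariant across environments (because its two ingredients are), and then integrate against the shifted marginal.

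First I would observe that $\xi(x) = |f(x) - P(Y=1\mid X=x)|$ depends on $x$ only through two quantities: the deterministic prediction $f(x)$, which is the same function in both environments by assumption (the model is not retrained), and the conditional $P(Y=1\mid X=x)$, which by the case-mix assumption $P_E(Y,X) = P_E(X) P(Y\mid X)$ is likewise invariant across environments. Hence the function $\xi$ itself is environment-free, and the only thing that differs between $\mathrm{ICI}_{\text{train}}$ and $\mathrm{ICI}_{\text{test}}$ is the measure against which $\xi$ is integrated.

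Next I would use the perfect-calibration hypothesis on the training distribution. By assumption $\xi(x)=0$ whenever $\phi_{\text{train}}(x) > 0$, so the set $N := \{x \in \mathcal{X} : \xi(x) > 0\}$ is $\phi_{\text{train}}$-null. The absolute continuity assumption $\phi_{\text{test}} \ll \phi_{\text{train}}$ then immediately gives $\phi_{\text{test}}(N) = 0$, i.e.\ $\xi(x)=0$ almost everywhere with respect to the test measure as well. Substituting into the ICI definition,
\begin{equation*}
    \mathrm{ICI}_{\text{test}} = \EX_{X \sim P_{\text{test}}(X)} \xi(X) = \int_{\mathcal{X}} \xi(x)\, d\phi_{\text{test}}(x) = \int_{\mathcal{X} \setminus N} 0 \, d\phi_{\text{test}}(x) = 0,
\end{equation*}
and the training ICI is zero by direct application of the hypothesis.

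There is essentially no serious obstacle: the content of the theorem is exactly the observation that calibration is a property of the conditional $P(Y\mid X)$ together with the function $f$, neither of which moves when only the marginal $P(X)$ is perturbed. The only care required is measure-theoretic bookkeeping around "perfectly calibrated" (which must be read as a $\phi_{\text{train}}$-a.e.\ statement rather than pointwise) and the use of $\phi_{\text{test}} \ll \phi_{\text{train}}$ to transfer the null set. Once those two points are handled, the conclusion $\mathrm{ICI}_{\text{train}} = \mathrm{ICI}_{\text{test}} = 0$ follows in one line.
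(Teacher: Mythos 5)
Your proposal is correct and follows essentially the same route as the paper's own proof: use the invariance of $\xi$ (since $f$ and $P(Y=1\mid X)$ are fixed), transfer the vanishing of $\xi$ from the training support to the test support via $\phi_{\text{test}} \ll \phi_{\text{train}}$, and integrate zero against the test measure. Your null-set phrasing is in fact slightly more careful than the paper's pointwise support argument, but the underlying idea is identical.
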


\begin{proof}[Proof of theorem \ref{th:calib}]
  By assumption we have $\phi_{train}(x) > 0 \implies \xi(x) = 0$, because $\phi_{train}(x) > 0 \implies \phi_{test}(x) > 0$ we also have that $\phi_{test}(x) >0 \implies \xi(x) = 0$ ($f$ is calibrated for all values of $x$ in the test distribution).
  Denote $\text{supp}_{test}(X)$ the subset of $\mathcal{X}$ where $\phi_{test}(x) > 0$. By definition of ICI we have that
    \begin{align}
        \text{ICI}_{\text{test}} &= \EX_{X \sim P_{\text{test}}(x)} \xi(x) \\
                                 &= \int_{\text{supp}_{test}(X)} \xi(x) d\phi_{\text{test}}(x) \\
                                 &= \int_{\text{supp}_{test}(X)} 0 d\phi_{\text{test}}(x) \\
                                 &= 0 \int_{\text{supp}_{test}(X)} d\phi_{\text{test}}(x) \\
                                 &= 0 * 1 \\
                                 &= 0
    \end{align}
\end{proof}

\begin{theorem}[discrimination is constant under marginal shifts in $Y$]
	\label{th:discr}
    Given binary outcome $Y$, prediction model $f: \mathcal{X} \to [0,1]$ and environment $E \in \{\text{train},\text{test}\}$.
	Assume $P_{E}(Y,X) = P_{E}(Y)P(X|Y)$ (marginal shift of $Y$ but not $X|Y$).
	Furthermore assume $0<P_{E}(Y=1)<1$ (marginal distribution of $Y$ in both distributions is non-deterministic).
	Then for all thresholds $0 \leq \tau \leq 1$:
	\begin{align}
    \text{sens}_{\text{test}}(\tau) &= \text{sens}_{\text{train}}(\tau) \\
    \text{spec}_{\text{test}}(\tau) &= \text{spec}_{\text{train}}(\tau)
	\end{align}

	And also
	\begin{equation}
        AUC_{\text{train}} = AUC_{\text{test}}
	\end{equation}
	
\end{theorem}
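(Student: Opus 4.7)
The plan is to leverage the transportability assumption $P_E(X|Y) = P(X|Y)$ directly: sensitivity and specificity, by their very definition, are conditional probabilities of $\{f(X) > \tau\}$ or $\{f(X) \le \tau\}$ given $Y$, and any functional of $P(X|Y)$ that does not involve the marginal $P(Y)$ must be identical across environments. So the proof is essentially an unpacking of definitions together with the invariance of $P(X|Y)$. The assumption $0 < P_E(Y=1) < 1$ is needed only to ensure the conditional probabilities are well-defined in both environments.

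First, I would fix a threshold $\tau$ and write
\begin{align*}
\text{sens}_E(\tau) &= P_E(f(X) > \tau \mid Y=1) = \int \bbone\{f(x) > \tau\} \, dP_E(x \mid Y=1),
\end{align*}
and similarly for specificity with $Y=0$ and $\{f(X) \le \tau\}$. Since the integrand depends only on $f$ and $\tau$ (not on $E$), and the measure $P_E(x \mid Y=y)$ equals $P(x \mid Y=y)$ by the assumption $P_E(Y,X) = P_E(Y) P(X|Y)$, the value of the integral does not depend on $E$. This immediately gives $\text{sens}_{\text{test}}(\tau) = \text{sens}_{\text{train}}(\tau)$ and $\text{spec}_{\text{test}}(\tau) = \text{spec}_{\text{train}}(\tau)$ for every $\tau \in [0,1]$.

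For the AUC claim, I would argue that the receiver-operating-curve is the image of the map $\tau \mapsto (1 - \text{spec}_E(\tau),\, \text{sens}_E(\tau))$, which by the pointwise equality above is the same curve in both environments. Since AUC is purely a functional of this curve (e.g.\ $\int_0^1 \text{sens}_E(\text{spec}_E^{-1}(1-u))\, du$, or any of the equivalent formulations), the two AUCs coincide. Alternatively, one can invoke the probabilistic representation $\mathrm{AUC}_E = P_E(f(X_1) > f(X_0) \mid Y_1=1, Y_0=0) + \tfrac{1}{2} P_E(f(X_1) = f(X_0) \mid Y_1=1, Y_0=0)$ with $(X_0,Y_0), (X_1,Y_1)$ independent copies; both conditional probabilities depend only on $P(X|Y=1)$ and $P(X|Y=0)$, which are invariant across environments.

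There is no real obstacle here: the argument is essentially a one-line observation that conditioning on $Y$ washes out the marginal $P_E(Y)$, so every metric built from $P(X|Y)$ alone is transportable. The only bookkeeping subtlety is to note that $0 < P_E(Y=1) < 1$ guarantees that $P_E(X \mid Y=0)$ and $P_E(X \mid Y=1)$ (and hence sensitivity, specificity, and the pair $(X_0, X_1)$ used in the AUC representation) are well-defined in both environments, so the equalities are not vacuous.
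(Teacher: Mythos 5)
Your proposal is correct and follows essentially the same route as the paper: the paper's proof likewise observes that sensitivity $P(f(X)>\tau\mid Y=1)$ and specificity $P(f(X)\le\tau\mid Y=0)$ are functionals of $P(f(X)\mid Y=y)$, which is invariant because $P_E(X\mid Y)=P(X\mid Y)$, and the AUC equality follows. Your write-up merely spells out the AUC step (via the ROC curve or the Mann--Whitney representation) and the role of $0<P_E(Y=1)<1$ more explicitly than the paper's one-line argument does.
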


\begin{proof}[proof of theorem \ref{th:discr}]
Theorem \ref{th:discr} follows directly from the fact that sensitivity: $P(f(X)>\tau|Y=1)$, specificity: $P(f(X) \le \tau | Y=0)$ and $P_{\text{train}}(f(X)|Y=y) = P_{\text{test}}(f(X)|Y=y)$.

\end{proof}

\subsection{Examples}
\label{app:examples}

Clinical examples of when the definition of a shift in case-mix as in Definition \ref{def:casemix} may apply across different environments.

\subsubsection{Examples in the causal direction}

Prediction of the occurrence of a cardiovascular event in the coming 10 years based on age and the presence of diabetes at baseline.
\begin{enumerate}
    \item train environment: general practitioner
    \item test environment: a diabetes out-patient clinic
\end{enumerate}

\subsubsection{Examples in the anti-direction}

Example 1: prediction of the presence of a stroke based on computed tomography imaging of the brain:
\begin{enumerate}
    \item train environment: secondary care hospital
    \item test environment: stroke center where patients are referred when they have stroke symptoms
\end{enumerate}

Example 2: Diagnosing sexually transmittable disease (see Figure \ref{fig:exstd}).

 \begin{figure}[ht!]
	\begin{subfigure}[b]{0.5\textwidth}
     \centering
	     \begin{tikzpicture}
             \node[text centered] (x) {$X$};
             \node[right of = x, node distance=1.5cm, text centered] (y) {$Y$};
             \node[right of = y, node distance=1.5cm, text centered] (y0) {$Y_0$};
             \node[draw, rectangle, below of = y, node distance=1.5cm, text centered] (e) {$E$};
             
             \draw[->] 
                 (e) edge (y0)
                 (y0) edge (y)
                 (y) edge (x);
	     \end{tikzpicture}
	\caption{example with all model variables}
    \label{dag:std1}
     \end{subfigure}
\hfill
	\begin{subfigure}[b]{0.5\textwidth}
     \centering
	     \begin{tikzpicture}
             \node[text centered] (x) {$X$};
             \node[right of = x, node distance=1.5cm, text centered] (y) {$Y$};
             \node[draw, rectangle, below of = y, node distance=1.5cm, text centered] (e) {$E$};
             
             \draw[->]
                 (y) edge  (x)
                 (e) edge  (y);
	     \end{tikzpicture}
	\caption{resulting DAG when marginalizing over $Y_0$}
    \label{dag:std2}
     \end{subfigure}
     \caption{Example setting of diagnosing a sexually transmittable disease (STD, $=Y$) with a blood test ($=X$) in either general public setting (\ref{dag:std1}) or in a HIV-positive clinic (\ref{dag:std2}). Patients with previous STDs such as HIV ($Y_0$) have a higher risk of future STDs, summarized with the arrow from $Y_0$ to $Y$. $Y_0=1$ is a selection criterion for the HIV-clinic, meaning that only patients with a prior STD get seen at the HIV-clinic.
 Treating $Y_0$ as not observed (thus marginalizing it out) results in the DAG in \ref{dag:std2}} 
     \label{fig:exstd}
 \end{figure}
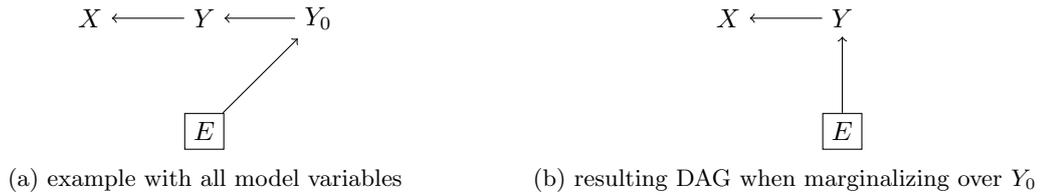

\subsection{Additional figures of simulation study}
\label{app:sims}

\begin{figure}[htpb]
    \centering
    \includegraphics[width=0.8\textwidth]{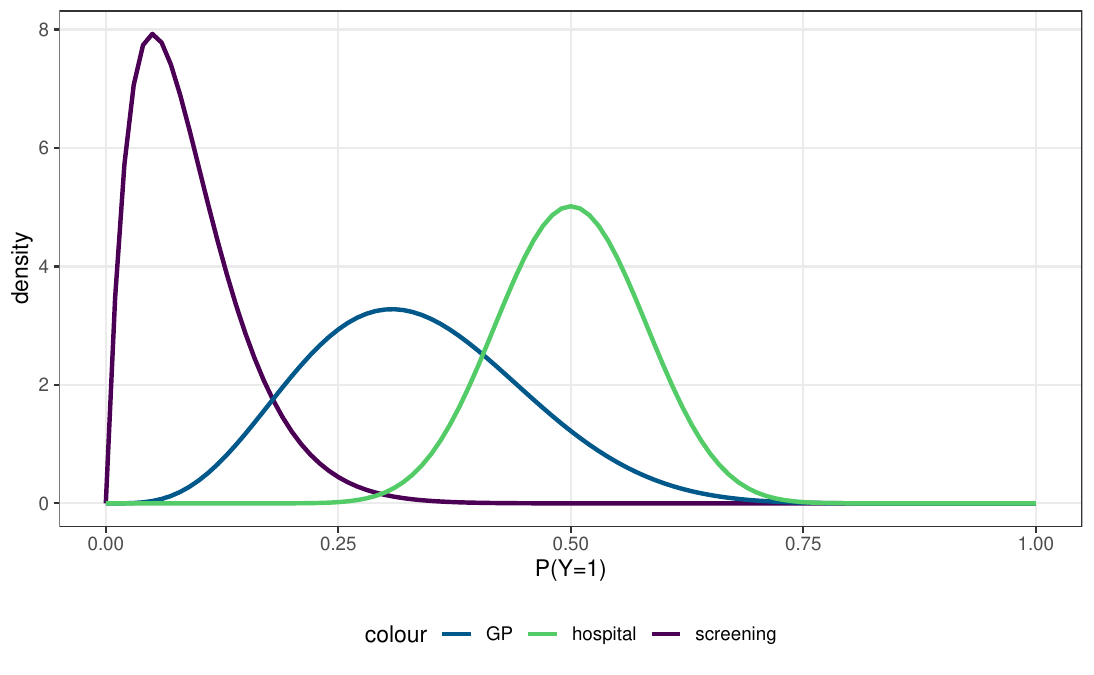}
    \caption{Marginal distribution $P(Y=1)$ in different environments, given by beta-distributions with parameters listed in Table \ref{tab:simprms}.}
    \label{fig:fig-py-prognosis-pdf}
\end{figure}

\begin{figure}[htpb]
    \centering
    \includegraphics[width=0.8\textwidth]{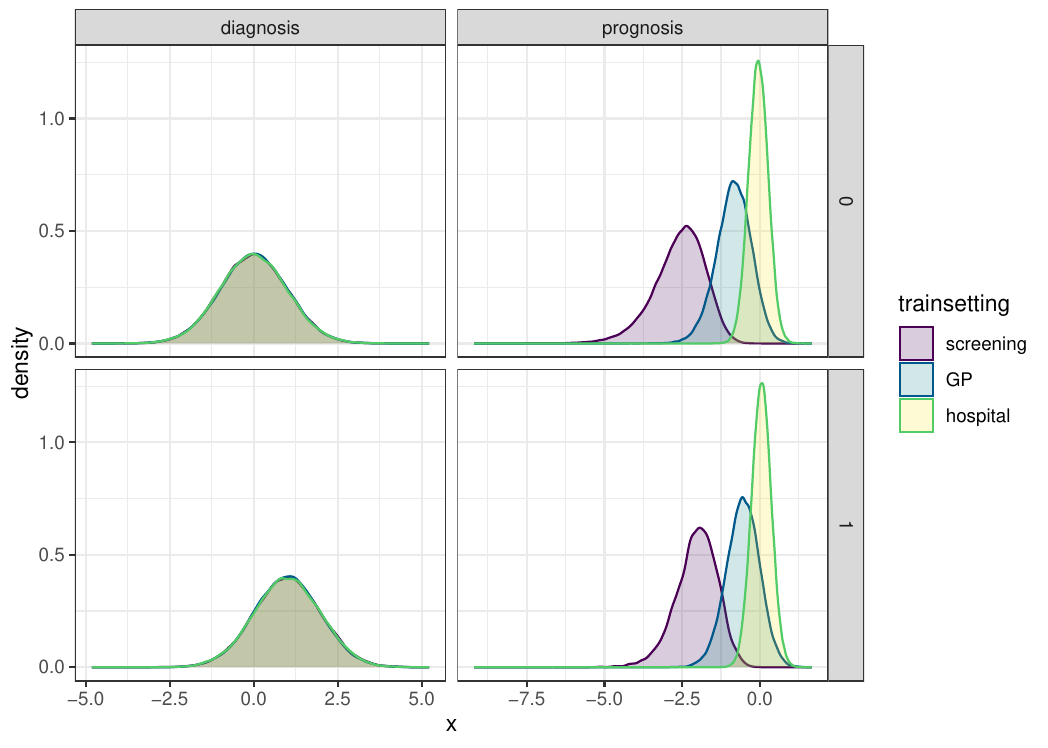}
    \caption{Conditional density of $P(X|Y=y)$ in the diagnosis or the prognosis simulation setting}
    \label{fig:fig-x-given-y-pdf}
\end{figure}

\begin{figure}[htpb]
    \centering
    \includegraphics[width=0.8\textwidth]{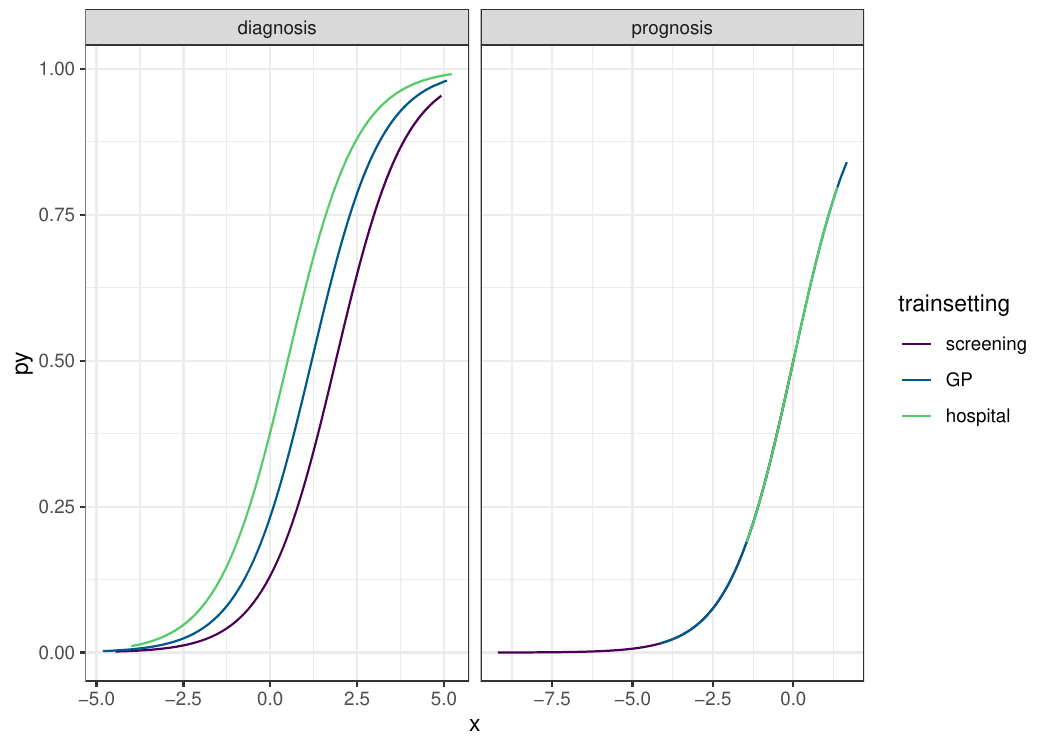}
    \caption{Conditional distribution of $P(Y=1|X=x)$ in the diagnosis or the prognosis simulation setting.}
    \label{fig:fig-y-given-x-pdf}
\end{figure}

\end{document}